\def \bc {\begin{center}}
\def \ec {\end{center}}
\def \ba {\begin{array}}
\def \ea {\end{array}}
\def \bea {\begin{eqnarray}}
\def \eea {\end{eqnarray}}
\def \be {\begin{equation}}
\def \ee {\end{equation}}
\def \nn {\nonumber}
\def \um {\frac{1}{2}}
\newtheorem{theorem}{Theorem}
\newtheorem*{acknowledgments}{Acknowledgments}
\theoremstyle{definition}
\newtheorem{definition}{Definition}
\begin{document}

\renewcommand{\evenhead}{M Calixto, V Aldaya, F F L\'{o}pez-Ruiz and E S\'{a}nchez-Sastre}
\renewcommand{\oddhead}{Coupling Nonlinear Sigma-Matter to Yang-Mills Fields: Symmetry Breaking Patterns}

\thispagestyle{empty}

\FirstPageHead{*}{*}{20**}{\pageref{firstpage}--\pageref{lastpage}}{Article}

\copyrightnote{200*}{M Calixto, V Aldaya, F F L\'{o}pez-Ruiz and E
S\'{a}nchez-Sastre}

\Name{Coupling Nonlinear Sigma-Matter to Yang-Mills Fields: Symmetry
Breaking Patterns}

\label{firstpage}

\Author{M. CALIXTO~$^{a,b}$, V. ALDAYA~$^b$, F. F. LOPEZ-RUIZ~$^b$ and E.
SANCHEZ-SASTRE~$^b$}

\Address{$^a$ Departamento de Matem\'{a}tica Aplicada y Estadística,
Universidad Polit\'{e}cnica de Cartagena, Spain \\ ~~E-mail:
Manuel.Calixto@upct.es\\[10pt]
$^b$ Instituto de Astrofísica de
Andalucía (CSIC), Granada, Spain
}

\Date{Received Month *, 200*; Accepted in Revised Form Month *, 200*}

\begin{abstract}
\noindent We extend the traditional formulation of Gauge Field Theory by
incorporating the (non-Abelian) gauge group parameters (traditionally
simple spectators) as new dynamical (nonlinear-sigma-model-type) fields.
These new fields interact with the usual Yang-Mills fields through a
generalized minimal coupling prescription, which resembles the so-called
Stueckelberg transformation \cite{Stueck}, but for the non-Abelian case.
Here we study the case of internal gauge symmetry groups, in particular,
unitary groups
$U(N)$. We show how to couple standard Yang-Mills Theory to
Nonlinear-Sigma Models on cosets of $U(N)$: complex projective, Grassman
and flag manifolds. These different couplings lead to distinct (chiral)
symmetry breaking patterns and \emph{Higgs-less} mass-generating
mechanisms for Yang-Mills fields.
\end{abstract}


\section{Introduction}
Although there has been many successful applications Non-Linear Sigma
Models (NLSM) in (Quantum Gauge) Field Theory, String Theory and
Statistical Mechanics, their basic role in Fundamental Physics is still
rather unexplored. Generally speaking, NLSM consists of a set of coupled
scalar fields
$\varphi^a(x^\mu), a=1,\dots,D,$ in a $d$-dimensional Minkowski spacetime
$M, \mu=0,1,2,\dots,d-1$, with the action
\be
S_\sigma=\lambda\int_M d^dx
g_{ab}(\varphi)\partial^\mu\varphi^a\partial_\mu\varphi^b,
\label{nlsmaction}\ee
where $\partial^\mu=\eta^{\mu\nu}\partial_\nu,
\partial_\nu=\partial/\partial x^\nu$, $\eta={\rm diag}(+,-,\dots,-)$ the Minkowski
metric and $\lambda$ a coupling constant. The field theory
(\ref{nlsmaction}) is called the NLSM with metric $g_{ab}(\varphi)$
(usually a positive-definite field-dependent matrix). The fields
$\varphi^a$ themselves can also be considered as the coordinates of an
internal Riemannian manifold $\Sigma$ with metric $g_{ab}$. In particular,
we shall consider the case in which
$\Sigma$ is a (semisimple) Lie group manifold $G$.

The relevance of NLSM in Quantum (Gauge) Field Theory originates from the
paramount importance of symmetry principles in fundamental physics. From
the String Theory point of view, the two-dimensional space $M$ represents
a string world sheet, whereas $g_{ab}$ is identified with the `truly'
spacetime metric representing the gravitational background where the
string propagates. In two dimensions we also have (infinite) conformal
symmetry and the possibility of adding new Wess-Zumino terms to our NLSM.

NLSM also provides a useful field-theoretical laboratory for studying some
two-dimensional, exactly solvable systems on a lattice, such as the Ising
model of the Heisemberg antiferromagnetism, in statistical mechanics. Some
particular $O(n)$-invariant two-dimensional NLSM are frequently used in
condensed matter physics in connection with antiferromagnetic spin chains
and the quantum Hall effect. Also, the effective Lagrangian for superfluid
He 3 is described by a NLSM. In four dimensions, pions and nucleons are
described by a (Skyrme) NLSM model, as solitonic solutions (`skyrmions').

We shall concentrate in the role that NLSM plays in the \emph{spontaneous
symmetry breaking} mechanism, which is crucial for phenomenological
applications of QFT like the Higgs-Kibble mechanism in the Standard Model
of Strong and Electro-Weak interactions, by means of which some vector
bosons acquire mass in a renormalizable way. According to the well known
Goldstone theorem (see e.g. \cite{Goldstone}), there are as many massless
(Nambu-Goldstone) particles as broken symmetry generators. If these
Nambu-Goldstone fields are scalars, their low energy effective action
often appears to be a NLSM. Usually, Goldstone bosons are eliminated from
the theory by gauge fixing.

Despite the undoubted success of the Standard Model in describing strong
and electro-weak interactions, a real (versus artificial) mechanism of
mass generation is still lacking. Needless to say that the discovery of a
Higgs boson (a quantum vibration of an abnormal Higgs vacuum) would be of
enormous importance; nevertheless, at present, no dynamical basis for the
Higgs mechanism exists and, as said, it is purely phenomenological. It is
true that there is actually nothing inherently unreasonable in the idea
that the state of minimum energy (the vacuum) may be one in which some
field quantity has a non-zero expectation value; in fact, many examples in
condensed-matter physics display this feature. Nevertheless, it remains
conjectural whether something similar actually happens in the weak
interaction case. Also, the ad hoc introduction of extra (Higgs) scalar
fields in the theory to provide mass to the vector bosons could be seen as
our modern equivalent of those earlier mechanical contrivances populating
the plenum (the ether), albeit very subtly. As in those days, new
perspectives are necessary to explain why it is really not indispensable
to look at things in this way at all.

One of the purposes of this paper is to provide a new formulation of gauge
theory in which the mass of gauge vector fields enters the theory in a
`natural' way without damaging gauge invariance. In this sense we shall
generalize the so-called Stueckelberg model for electrodynamics
\cite{Stueck} to account for a Higgs-less mass-generating mechanism for
gauge fields. In our new approach we shall nearly restrict the external
information to the symmetry group and, therefore, the group parameters,
described by Lagrangians of NLSM type, will acquire dynamical content as
`exotic' matter fields.\footnote{It is worth pointing out that the
incorporation of group parameters into some dynamical framework has
already been considered in other contexts, for example, in \cite{jackiw}.
There, conventional Eulerian fluid mechanics is extended to encompass the
possibility of describing a plasma state of quarks and gluons produced as
the result of high-energy collisions of heavy nuclei \cite{Ludlam} due to
the fact that such fluid may posses degrees of freedom indexed by group
variables.}. By the time being, we shall not enter into the possible
physical meaning of these $\sigma$-matter fields. Just to mention that,
when this idea is applied to the Weyl group (Poincar\'{e}+dilations), and
the corresponding gauge gravitational theory is developed, $\sigma$-fields
appear to be a natural source to account for some sort of \emph{dark
matter} intrinsically related to the gauge-group parameter associated with
scale transformations \cite{mpla}.

The underlying mathematical framework relies on the idea of {\it jet-gauge
group} \cite{rmp} introduced in Sec. \ref{jet-sec}. In Sec.
\ref{lf-jet-uti} we revise the Lagrangian formalism on jet-gauge groups
and generalize the well known Utiyama theorem \cite{Utiyama} which
provides a prescription to `minimally' couple Yang-Mills fields to
$\sigma$-matter fields. In Sec. \ref{chiral-break} we discuss several
\emph{chiral} gauge symmetry breaking patterns related to different mass
matrices. Sec. \ref{comments} is devoted to some comments on the
quantization of this model.

\section{Jet-Gauge Groups and Nonlinear $\sigma$-Fields}\label{jet-sec}

\begin{definition} \textbf{(Gauge group)} Let $G$ be a (matrix) Lie group
(the ``rigid'' group) and $M$ the Minkowski space-time (or any other
orientable space-time manifold). The gauge group $G(M)$ (``local'' or
current group) is the set of mappings
\begin{equation}
G(M)=\{g:M\to G,\, x\mapsto g(x)\}={\rm Map}(M,G)
\end{equation}
with point-wise multiplication $(gg')(x)=g(x)g'(x)$. The corresponding Lie
algebra
${\cal G}(M)$ is the tensor product
${\cal F}(M)\otimes {\cal G}=\{f^{a}X_a, \, a=1,\dots,{\rm dim}G\}$, where
${\cal F}(M)$ is the multiplicative algebra of $(C^\infty)$ differentiable
functions $f$ on $M$, and ${\cal G}$ is the Lie algebra of $G$ with
generators $X_a$. The commutation relations of this local algebra are
$[f\otimes X, h\otimes Y]=fh\otimes [X,Y]$ since, for internal symmetries, the ``rigid'' group
$G$ does not act on the space-time manifold $M$.
\end{definition}

We shall mainly consider special unitary groups $G=SU(N)$, the Lie algebra
of which ${\cal G}=su(N)=\langle X_a,\,a=1,\dots,N^2-1\rangle$ can be
expanded in terms of traceless hermitian matrices, $X_a$, whose
Lie-algebra commutators $[X_a,X_b]=C_{ab}^c X_c$ are given in terms of
totally antisymmetric structure constants $C_{ab}^c$. The generators $X_a$
can also be chosen to be orthogonal in the sense ${\rm
Tr}(X_aX_b)=\delta_{ab}$. A given group element $g\in G$ can be written in
terms of a (local) system of canonical coordinates
$\{\varphi^a, a=1,\dots,{\rm dim}(G)\}$ at the identity element as $g=e^{i\varphi^a X_a}$.
Thus, the composition group law $g''=g'g$ can also be locally written as:
\be
\varphi''^a=\varphi'^a+\varphi^a+\frac{1}{2}C_{bc}^a\varphi'^b\varphi^c+
\hbox{{\rm higher-order \ terms}}, \label{grouplaw}\ee
by using the Baker-Campbell-Hausdorff formula. Let us denote an element
$g(x)\in G(M)$ simply by its coordinates $\varphi^a(x)$ (the \textbf{exotic matter $\sigma$-fields}) .

\begin{definition} \textbf{(Jet prolongations)} Given a gauge group $G(M)$, we define the group $J^{1}(G(M))$ of the 1-jets of $G(M)$ as the quotient:
\[J^{1}(G(M))\equiv G(M)\times M/\sim ^{1}\]
where the equivalence relation $\sim ^{1}$ is defined as follows:
\[(\varphi,x)\sim ^{1} (\varphi',x')\Longleftrightarrow\left\{\ba{l} x=x',\\ \varphi(x)=\varphi'(x), \\
\partial_\mu{\varphi(x)}=\partial_\mu
\varphi'(x)\ea \right.\] for all
$(\varphi,x)$, $(\varphi',x')$ belonging to $G(M)\times M$. This definition may be easily extended from order $r=1$ to $r$-th order. A coordinate
system for $J^{1}(G(M))$ is $\{x^{\mu},\varphi^a,\varphi^a_{\mu}\}$.
\end{definition}
The formal definition of $J^{1}(G(M))$ is fully analogous to that of the
$(q^i,\,\dot{q}^j)$ phase-space in Lagrangian Mechanics, or
$(\psi^\alpha,\,\psi^\beta_\mu)$ in Lagrangian Field Theory, when one
desires to vary independently coordinates and velocities (momenta)
according to the modified Hamilton principle.

\begin{definition} \textbf{(Jet-gauge group)} We define the (infinite-dimensional) jet-gauge
group $G^1(M)$ as the set of mappings from M into $J^1(G(M))$:
\[ G^1(M)\equiv {\rm Map}(M, J^1(G(M))).\]
It is parametrized by the coordinate system
$\{\varphi^a(x),\varphi^a_\mu(x)\}$ and has the composition group law (\ref{grouplaw}), at each point $x\in M$, together with:
\be
\varphi''^a_\mu=\varphi'^a_\mu+\varphi^a_\mu+\frac{1}{2}C_{bc}^a\varphi'^b_\mu\varphi^c+\frac{1}{2}C_{bc}^a\varphi'^b\varphi^c_\mu+
{\rm higher \ order}.\label{grouplawjet}\ee
\end{definition}
In this formalism,
$\varphi^a_{\mu}$ are essentially the standard gauge vector potentials
(\textbf{Yang-Mills fields})
$A^{a}_{\mu}$ or connections, the actual relationship being:
 \be A^{a}_{\mu}\equiv \theta^{a}_b(\varphi)\varphi^b_{\mu} \,,\ee
where $\theta^{a}_b(\varphi)$ is the (non-constant) invertible matrix
defining the (left-) invariant canonical 1-form on the group
\be {\theta^L}^{a}=\theta^{a}_b(\varphi)d\varphi^b={\rm Tr}(i g^{-1}dg
X_a), \label{theta} \ee dual to the (left-invariant) vector fields \be
X^L_{a}=X^b_{a}(\varphi)\frac{\partial}{\partial
\varphi^b},\;\;X^b_{a}(\varphi)\equiv \frac{\partial
\varphi^{''b}(\varphi^{'},\varphi)}{\partial
\varphi^{a}}|_{\varphi=0,\varphi'=\varphi},\ee
that is: $\theta^{a}_bX^b_{c}=\delta ^a_c$. Writing then $A_\mu=i
g^{-1}g_\mu$, the group law (\ref{grouplawjet}) for Yang-Mills fields is
simply:
\[A''_\mu(x)=g^{-1}(x)A'_\mu(x)g(x)+A_\mu(x).\]
Note that $\varphi^a_{\mu}$ comprises all possible values of derivatives
of $\varphi^a$, but in general $\varphi^a_\mu\neq \partial_\mu \varphi^a$.
That is, not all Yang-Mills fields $A_\mu$ are ``pure gauge'',
$\theta_\mu=i g^{-1}\partial_\mu g$, except for the particular inmersion
(1-jet-extension) of the gauge group $G(M)$ into the jet-gauge group:
\be j^1: G(M)\rightarrow G^1(M), \,\,\varphi \mapsto
j^1(\varphi)=(\varphi^a,\partial_\mu\varphi^a). \ee

\section{Lagrangian Formalism on Jet-Gauge Groups: Generalized Utiyama
Theorem}\label{lf-jet-uti}

In the standard formulation of gauge theories, the well-known Minimal
Coupling Principle (or Utiyama theorem \cite{Utiyama}, see also
\cite{rmp}) for internal gauge symmetries establishes that if the action
of some matter fields
$\psi^{\alpha}$,
$\alpha=1,...,n$
\[S=\int \mathcal{L}_{{\rm m}}(\psi^{\alpha},\partial_{\mu}\psi^{\alpha})d^{4}x,\]
is invariant under a rigid internal Lie group $G$, then the modified
action
\[\widehat{S}=\int [\mathcal{L}_{{\rm m}}(\psi^{\alpha},D_{\mu}\psi^{\alpha})+\mathcal{L}_{0}(F^{a}_{\mu\nu})]d^{4}x\]
is invariant under the gauge group $G(M)$, where
\[D_{\mu}\psi^{\alpha}\equiv \partial_{\mu}\psi^{\alpha}-eA^{a}_{\mu}(X_a)^{\alpha}_{\beta}\psi^{\beta}\]
is usually known as the covariant derivative ($e$ is a coupling constant),
and
 \[F^{a}_{\mu\nu}\equiv\frac{1}{e}[D_\mu,\,D_\nu]^{a}=\partial_{\mu}A^{a}_{\nu}- \partial_{\nu}A^{a}_{\mu}+\frac{e}{2}C^{a}_{bc}(A^{b}_{\mu}A^{c}_{\nu}-A^{b}_{\nu}A^{c}_{\mu})\]
 is known as curvature of the connection $A^{a}_{\mu}$.

Here we shall treat the gauge group parameters $\varphi^a\in G(M)$ as
``exotic matter'' $\sigma$-fields, so that our configuration space is now
$J^1(G(M))$, with coordinates $\{x^{\mu},\varphi^a,A^{a}_{\mu}\}$, and
Lagrangians are accordingly functions
\[\mathcal{L}(x^{\mu},\varphi^a,A^{a}_{\mu};\partial_{\nu}\varphi^a,\partial_{\nu}A^{a}_{\mu}).\]
We shall proceed to formulate some sort of Minimal Coupling Principle on
$J^1(G(M))$:
\begin{theorem} \textbf{(Generalized Utiyama's Theorem)}
 If the action
\[
S_\sigma=\int \mathcal{L}_\sigma(\varphi^a,\partial_{\mu}\varphi^a)d^{4}x,
\]
of the ``exotic matter'' $\sigma$-fields
$\varphi^a,\,a=1,...,{\rm dim} G$ is invariant under the global (rigid) internal Lie group $G$,
 i.e.

\[
\delta^{\rm global}_{a}\mathcal{L}_\sigma(\varphi^b,\partial_\mu
\varphi^b)\equiv X^b_{a}\frac{\partial {\mathcal L}_\sigma}{\partial
\varphi^b}+ \frac{\partial X^b_{a}}{\partial \varphi^c}\partial_\mu
\varphi^c \frac{\partial {\mathcal L}_\sigma}{\partial (\partial_\mu
\varphi^b)}=0,
\]
then the modified action $S_{\rm tot}=\widetilde{S}_\sigma+S_0$, with
\be \widetilde{S}_\sigma\equiv \int \mathcal{L}_\sigma
(\varphi^a,\widetilde{D}_\mu\varphi^a)d^{4}x,\,\,\,S_0=\int\mathcal{L}_0(F^a_{\mu\nu})d^4x,
\ee
is invariant under the gauge (local) group $G(M)$, where
\[
 \widetilde{D}_{\mu}\varphi^a\equiv\partial_{\mu}\varphi^a-eA^{b}_{\mu}X^a_{b}
\]
is the ``covariant derivative'' for $\sigma$-fields.
\end{theorem}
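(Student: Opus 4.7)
The proof splits naturally into two independent pieces: invariance of the Yang-Mills sector $S_0$ and invariance of the coupled $\sigma$-sector $\widetilde{S}_\sigma$. The first is the classical Utiyama theorem recalled earlier: under $G(M)$ the curvature $F^a_{\mu\nu}$ transforms in the adjoint representation (equivalently $F_{\mu\nu}\mapsto g^{-1}F_{\mu\nu}g$), so any Ad-invariant polynomial such as $F^a_{\mu\nu}F^{a\,\mu\nu}$ gives an invariant $\mathcal{L}_0$. All the new content concerns $\widetilde{S}_\sigma$.

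First I would read off the infinitesimal local transformations directly from the jet-gauge composition laws (\ref{grouplaw})--(\ref{grouplawjet}). For a parameter $\epsilon^a(x)$ one obtains $\delta\varphi^a=\epsilon^b(x)X^a_b(\varphi)$, exactly the action of the left-invariant vector fields $X^L_a$, together with the standard inhomogeneous transformation $\delta A^a_\mu = \frac{1}{e}\partial_\mu\epsilon^a + C^a_{bc}A^b_\mu\epsilon^c$ of the gauge potential, once the relation $A^a_\mu=\theta^a_b(\varphi)\varphi^b_\mu$ is used to trade the jet coordinate $\varphi^a_\mu$ for the connection $A^a_\mu$.

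The central step, and the one I expect to be the main technical obstacle, is to verify that the modified derivative transforms \emph{covariantly}, i.e.\ like $\delta\varphi^a$ but with $\partial_\mu\varphi^a$ replaced everywhere by $\widetilde{D}_\mu\varphi^a$:
\[
\delta(\widetilde{D}_\mu\varphi^a) = \epsilon^b(x)\,\frac{\partial X^a_b}{\partial\varphi^c}\,\widetilde{D}_\mu\varphi^c.
\]
Expanding the left-hand side directly, the unwanted $(\partial_\mu\epsilon^b)X^a_b$ piece coming from $\partial_\mu(\delta\varphi^a)$ is cancelled exactly against the $\frac{1}{e}\partial_\mu\epsilon^b$ piece of $\delta A^b_\mu$. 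What remains are two quadratic-in-$(\epsilon,A)$ terms which must combine into the right-hand side; this closure is forced by the coordinate form of $[X^L_b,X^L_d]=C^c_{bd}X^L_c$, namely $X^c_b\,\partial_c X^a_d - X^c_d\,\partial_c X^a_b = C^c_{bd}X^a_c$. The subtlety is that the structure constants $C^c_{bd}$ appearing in $\delta A^a_\mu$ have to dovetail precisely with those encoded nonlinearly in $X^a_b(\varphi)$ through this Maurer-Cartan identity.

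Once covariance of $\widetilde{D}_\mu\varphi^a$ is established, the variation of the coupled Lagrangian is
\[
\delta\widetilde{\mathcal{L}}_\sigma \;=\; \frac{\partial\mathcal{L}_\sigma}{\partial\varphi^a}\,\epsilon^b(x)X^a_b \;+\; \frac{\partial\mathcal{L}_\sigma}{\partial(\widetilde{D}_\mu\varphi^a)}\,\epsilon^b(x)\,\frac{\partial X^a_b}{\partial\varphi^c}\,\widetilde{D}_\mu\varphi^c,
\]
which is pointwise identical in form to the global invariance identity stated in the hypothesis, but with the constant parameter promoted to $\epsilon^b(x)$ and $\partial_\mu\varphi^c$ promoted to $\widetilde{D}_\mu\varphi^c$. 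Because that identity holds as a functional identity in its two arguments, the substitution preserves it, so $\delta\widetilde{\mathcal{L}}_\sigma=0$ at every $x$. Together with the Ad-invariance of $\mathcal{L}_0(F^a_{\mu\nu})$, this yields invariance of $S_{\rm tot}$ under the full gauge group $G(M)$, completing the proof.
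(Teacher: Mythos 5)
Your proof is correct, and it reaches the same endpoint as the paper — the local variation of $\widetilde{\mathcal{L}}_\sigma$ collapses to the global invariance identity with $\partial_\mu\varphi^a$ replaced by $\widetilde{D}_\mu\varphi^a$ — but it gets there by a differently organized argument. You prove covariance of the modified derivative head-on, isolating the identity
\[
\delta\bigl(\widetilde{D}_\mu\varphi^a\bigr)=\epsilon^b\,\frac{\partial X^a_b}{\partial\varphi^c}\,\widetilde{D}_\mu\varphi^c ,
\]
and you correctly identify that its verification rests on the cancellation of the $\partial_\mu\epsilon$ terms plus the closure relation $X^c_b\,\partial_c X^a_d-X^c_d\,\partial_c X^a_b=C^c_{bd}X^a_c$ for the left-invariant vector field components. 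The paper instead performs the change of variables $\phi^a=\varphi^a$, $\phi^a_\mu=\partial_\mu\varphi^a-eA^b_\mu X^a_b$, $B^a_\mu=A^a_\mu$, rewrites the partial derivatives by the chain rule, and declares the result ``straightforward''; in doing so it never displays the covariance of $\widetilde{D}_\mu\varphi^a$ explicitly, but the same Lie-bracket identity is what makes the quadratic-in-$(f,A)$ terms reassemble into $f^a\,\delta^{\rm global}_a\mathcal{L}_\sigma(\phi,\phi_\mu)$ after the substitution. Your version buys transparency: it exposes the one nontrivial geometric input (the Maurer--Cartan/commutator identity) that the paper's ``straightforward'' step silently uses, and it makes clear why the minimal coupling must be built from the components $X^a_b(\varphi)$ of the left-invariant fields rather than from constants. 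The paper's version buys brevity and generality of bookkeeping: by treating the whole variation as a coordinate change on $J^1(G(M))$ it avoids committing to an explicit transformation law for $A^a_\mu$ and handles all terms at once. Both are valid; the only caveat for yours is to fix sign and normalization conventions consistently (the relative factor of $e$ between $\delta A^a_\mu$ and $\delta\varphi^a$, and the sign convention in $[X^L_a,X^L_b]=C^c_{ab}X^L_c$) so that the two quadratic terms cancel with the stated signs.
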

 \begin{proof}
 As the local invariance of $S_0$ is already well-known in
the standard gauge theory,
 we shall focus on the local invariance of $\widetilde{S}_{\sigma}$. We must prove that
the new Lagrangian describing the gauge-group parameters as well as their
interaction with the gauge fields $A^{a}_\mu$ (according to the
prescription of Minimal Coupling, i.e. supposing that the group parameters
interact only with the gauge fields and not with their derivatives),

\[
\widetilde{\mathcal{L}}_{\sigma}(\varphi^a,\partial_\mu
\varphi^a,A^{a}_\mu)\equiv \mathcal{L}_{\sigma}(\varphi^a,\partial_\mu
\varphi^a-eA^{b}_\mu X^a_{b}),
\]

 is invariant under the gauge group $G(M)$ in the sense that

\[
\delta \widetilde{\mathcal{L}}_{\sigma}(\varphi^a,\partial_\mu \varphi^a,
A^{a}_\mu)\equiv f^{a}X^b_{a}\frac{\partial
\widetilde{\mathcal{L}}_{\sigma}} {\partial
\varphi^b}+\left(f^{a}\frac{\partial X^b_{a}}{\partial
\varphi^c}\partial_\mu \varphi^c+X^b_{a}\frac{\partial f^{a}}{\partial
x^\mu}\right)\frac{\partial \widetilde{\mathcal{L}}_{\sigma}} {\partial
(\partial_\mu \varphi^b)}\]\[+\left(gf^{b}C^a_{bc}A^{c}_\mu+\frac{\partial
f^{a}}{\partial x^\mu}\right) \frac{\partial
\widetilde{\mathcal{L}}_{\sigma}}{\partial A^{a}_\mu}=0,
\]
where $f^{a}$ denote gauge-algebra parameters.

 Let us consider the following change of variables:

\bea \phi^a&=&\varphi^a,\nn\\ \phi^a_\mu&=&\partial_\mu
\varphi^a-eA^{b}_\mu X^a_{b},\nn\\ B^{a}_\mu&=&A^{a}_\mu.\nn \eea

 Then, the partial derivatives related to the old variables can be
expressed in terms of the new ones:

\bea \frac{\partial\quad }{\partial \varphi^a}&=&\frac{\partial\quad
}{\partial \phi^a}- B^{c}_\mu \frac{\partial X^b_{c}}{\partial
\phi^a}\frac{\partial\quad\quad } {\partial (\phi^b_\mu)},\nn\\
\frac{\partial\quad\quad }{\partial (\partial_\mu \varphi^a)}&=&
\frac{\partial\quad\quad }{\partial (\phi^a_\mu)},\nn\\
\frac{\partial\quad }{\partial A^{a}_\mu}&=&\frac{\partial\quad }{\partial
B^{a}_\mu}- X^b_{a} \frac{\partial\quad\quad }{\partial (\phi^b_\mu)}.\nn
\eea

After this change of variables it is now straightforward to arrive at

\[\delta \widetilde{\mathcal{L}}_{\sigma}=f^{a}\delta^{\rm global}_{a}\mathcal{L}_{\sigma}(\phi^a,\phi^a_\mu)=0,\]
equality which follows from the hypothesis of invariance of the
$\sigma$-matter action under the global group.
\end{proof}

As a consequence, the new ``minimal coupling'' $\partial_\mu
\varphi^a\rightarrow \partial_\mu \varphi^a-eA^{b}_\mu X^a_{b}$ now occurs
in an affine manner. Indeed, the matrix $X^a_{b}$ is invertible, and
therefore the minimal coupling above is proportional to
\[ \theta^{a}_b[\partial_{\mu}\varphi^b-eX^b_{c}A^{c}_{\mu}]\equiv \theta^{La}_\mu-eA^{a}_\mu, \]
where $\theta^{La}$ is the canonical (left-)invariant 1-form in
(\ref{theta}). The new minimal coupling, when written in the form
$\theta^L-A$, strongly suggests the introduction of ``exotic matter'' of the
$\sigma$-model type:
\be
 \mathcal{L}_{\sigma}=\frac{\lambda^2}{2}{\rm Tr}_G(\theta^L_\mu{\theta^L}^\mu)=-\frac{\lambda^2}{2}
 {\rm Tr}_G(g^{-1}\partial_\mu gg^{-1}\partial^\mu g)\,. \label{traza}
\ee
This Lagrangian is $G$-invariant (left- and right-invariant, that is,
chiral) and the {\it new} minimal coupling gives rise to
\be \widetilde{\mathcal L}_{\sigma}=\frac{\lambda^2}{2}{\rm
Tr}_G[(\theta^L_\mu-eA_\mu)({\theta^L}^\mu-eA^\mu)],\label{mincoupsig} \ee
which is {\it gauge-invariant} even though it contains {\it mass terms}
$\frac{\lambda^2}{2} e^2{\rm Tr}_G[A_\mu A^\mu]$ for the $A_\mu$ fields, a piece which spoils
gauge invariance in the traditional framework of Yang-Mills theories.


\section{Chiral Symmetry Breaking Patterns}\label{chiral-break}

The virtue of a kinetic term like (\ref{traza}) is the two-side symmetry,
that is, \emph{chirality}. In fact, any function of $\theta^L$ is of
course left-invariant, but only a scalar sum on all the group indices
$a=1,\dots,{\rm dim}(G)$ can provide also right invariance. Therefore,
several chiral symmetry breaking patterns are possible by considering a
\emph{partial trace} $\sigma$-Lagrangian \be
 \mathcal{L}_{\sigma}^{(\lambda)}=\um {\rm Tr}_{G}^{(\lambda)}(\theta^L_\mu{\theta^L}^\mu)
 \equiv\um{\rm Tr}_{G}({\theta^L_\lambda}^\mu{\theta^L_\lambda}_\mu), \label{trazapar}
\ee
 where we have defined
\be
\theta^L_\lambda\equiv[\theta^L,\lambda]\ee
the `projection' of $\theta^L$ by the \emph{mass matrix}
$\lambda=i\lambda^aH_a$, with
 $\lambda^a\in\mathbb R$ and
$H_a$ the Lie algebra generators of the toral (Cartan, maximal Abelian) subgroup $H$ of $G$ (see later on this section for an example).
Defining
\[\Lambda\equiv g\lambda g^{-1},\,\,g\in G,\] (the adjoint action of $G$ on its Lie algebra) we have an alternative way of
writing (\ref{trazapar}) as \[\mathcal{L}_{\sigma}^{(\lambda)}=\um{\rm
Tr}_{G}(\partial_\mu\Lambda\partial^\mu\Lambda),\] which
 is singular due to the constraint ${\rm Tr}_{G}(\Lambda^2)={\rm
 Tr}_{G}(\lambda^2)=\lambda_a\lambda^a=$constant. Introducing Lagrange
 multipliers, the equations of motion read:
 \be
 \partial_\mu\partial^\mu\Lambda=-\frac{{\rm
 Tr}_{G}(\partial_\mu\Lambda\partial^\mu\Lambda)}{{\rm Tr}_{G}(\Lambda^2)}\Lambda,\ee
which describe a set of coupled Klein-Gordon-like fields $\phi^a={\rm
Tr}_{G}(\Lambda X_a)$ with variable mass $m^2={\rm
 Tr}_{G}(\partial_\mu\Lambda\partial^\mu\Lambda)/{\rm Tr}_{G}(\Lambda^2)$.

Let us explicitly consider the case of the unitary group
$G=U(N)$. We shall take, as the Lie algebra generators $X_a$, the step operators $X_{\alpha\beta}$ defined by the usual matrix elements:
\be
(X_{\alpha\beta})_{\gamma\rho}=\delta_{\alpha\gamma}\delta_{\beta\rho},\,\,\alpha,\beta,\gamma,\rho=1,\dots,N,
\ee
fulfilling the commutation relations:
\be
[X_{\alpha\beta},X_{\gamma\rho}]=\delta_{\gamma\beta}X_{\alpha\rho}-\delta_{\alpha\rho}X_{\gamma\beta},\ee
and the usual orthogonallity relations:
\be
{\rm
Tr}(X_{\alpha\beta}X_{\gamma\rho})=\delta_{\alpha\rho}\delta_{\gamma\beta}.
\ee
Note that the step generators $X_{\alpha\beta}$ are not hermitian but
$X_{\alpha\beta}^\dag=X_{\beta\alpha}$, where
$X^\dag$ denotes hermitian conjugate. This fact introduces some minor modifications
with respect to the general theory exposed before. For example, the
canonical left-invariant 1-form
$\theta^L$ can be written in this Lie-algebra basis as (we shall drop the
upper-script $L$ for convenience):
\be
\theta_\mu=\sum_{\alpha,\beta=1}^{N} \theta^{\alpha\beta}_\mu
X_{\alpha\beta},\ee
with $\theta^{\alpha\beta}=\bar{\theta}^{\beta\alpha}$ in order to make
$\theta^\dag=\theta$ (hermitian). The mass matrix $\lambda$ is now
\be
\lambda=i \sum_{\alpha=1}^{N} \lambda^\alpha X_{\alpha\alpha}, \ee
where the complex $i$ has been introduced in order to make the projected
1-form
\be\theta_\lambda=[\theta,\lambda]=-i\sum_{\alpha,\beta=1}^{N}
\theta^{\alpha\beta}(\lambda_\alpha-\lambda_\beta) X_{\alpha\beta}\ee
hermitian too.

When minimally coupled, like in (\ref{mincoupsig}), the partial trace
$\sigma$-Lagrangian (\ref{trazapar}) only assigns mass $m_{\alpha\beta}=e^2(\lambda_\alpha-\lambda_\beta)^2$ to those Yang-Mills
fields $A^{\alpha\beta}$ living on a certain coset
$G/G_\lambda$ of the group $G$, where $G_\lambda$ represents the `unbroken' chiral symmetry subgroup. Indeed,
the Lagrangian $\mathcal{L}_{\sigma}^{(\lambda)}$ is {\it left-}invariant
under the whole group $G$, but right-invariant under the unbroken subgroup
$G_\lambda$ only.

For $G=U(N)$ we can consider several symmetry breaking patterns according
to distinct mass matrix $\lambda$ choices:
\begin{enumerate}
\item  For the case $\lambda_\alpha\not=\lambda_\beta,\forall \alpha,\beta=1\dots,N$ the unbroken symmetry is
$G_\lambda=U(1)^{N}$, so that we give mass to all of $N(N-1)/2$ charged (complex) Yang-Mills
fields $A^{\alpha\beta}, \alpha>\beta$ (the analogue of $W_\pm$ in $U(2)$
invariant electro-weak model \cite{Goldstone}) living on the \emph{flag
manifold} (coset)
$\mathbb F_N=G/G_\lambda=U(N)/U(1)^{N}$ . The neutral (not charged) vector
bosons $A^{\alpha\alpha}$ remain massless.
\item For $\lambda_\alpha=\lambda_\beta, \forall \alpha,\beta=2,\dots,N$
the unbroken symmetry is
$G_\lambda=U(N-1)\times U(1)$, so that we have $N-1$ massive charged Yang-Mills fields $A^{1\alpha}, \alpha>1$ living
on the \emph{complex projective} space
$\mathbb CP^{N-1}=U(N)/U(N-1)\times U(1)$, in addition to $(N-1)(N-2)/2$ massless charged vector bosons
$A^{\alpha\beta}, \alpha\not=\beta\not=1$ and $N$ massless neutral vector bosons
$A^{\alpha\alpha}$.
\item For other choices like:
\[\lambda_1=\lambda_2=\dots=\lambda_{N_1}\not=\lambda_{N_1+1}\not=\dots\not=\lambda_{N-N_2}=\dots=\lambda_{N}\]
the unbroken symmetry group is $G_\lambda=U(N_1)\times U(N_2)\times U(1)$
giving $N_1(N_1-1)/2+N_2(N_2-1)/2$ massless charged vector bosons, $N$
massless neutral vector bosons and massive charged vector bosons
corresponding to the \emph{complex Grasmannian}
$\mathbb CG(N_1,N_2)=U(N)/U(N_1)\times U(N_2)\times U(1)$ (see \cite{jgp} for suitable coordinate
systems on these coset spaces).
\end{enumerate}
Note that this `partial trace' mechanism always keeps the $N$ neutral
vector bosons $A^{\alpha\alpha}$ massless. However, we could always supply
mass to the neutral vector boson $Z_0$, related to the central generator
$H_0=\sum_{\alpha=1}^N X_{\alpha\alpha}$ (which commutes with everything), without spoiling
the previous mechanism, using the conventional Stueckelberg model for the
Abelian case $G=U(1)$.

These whole scheme agrees with nature, where we find just one intermediate
massive neutral vector boson $Z_0$ (inside weak currents); the rest of
intermediate neutral vector bosons (photon and gluons) remain massless.

\section{Comments and Outlook}\label{comments}
The fact that both the gauge functions $\varphi$ and the vector potentials
$A$ themselves may be considered as parameters of a group,
$G^1(M)$, which constitutes the basic symmetry group of the theory (in the
sense that the corresponding Noether invariants parametrizes the solution
manifold), permits to face the quantum theory under the perspective of a
non-perturbative group-theoretical framework (according to the scheme
outlined in Ref. \cite{jpa}) where questions such as renormalizability,
finiteness, unitarity, etc., are much better addressed. The Hilbert space
of our theory will be the carrier space of unitary irreducible
representations of a centrally extended infinite-dimensional Lie group
$\tilde G$, incorporating $G^1(M)$ and the phase space of our theory.

Let us make a brief discussion of the physical field degrees of freedom of
our theory. This analysis of the dynamical content of the theory can be
achieved without the need of writing down the explicit expression of the
(linearized) field equations of motion. Instead, we shall resort again to
a group-representation viewpoint at the Lie algebra level (see \cite{jpa}
for more precise details on a Group Approach to Quantization of Yang-Mills
theories). In fact, for pure, massless,
$SU(N)$-Yang-Mills theory we can fix the (Weyl) gauge and set the temporal
part
$A_0^a=0, a=1,\dots,N^2-1$. The equal-time  Lie algebra commutators between non-Abelian vector potentials
$A_j^a, j=1,2,3; a=1,\dots,N^2-1$, electric field
$E_j^a$ and gauge-group generators $\varphi^a$ (in natural $\hbar=1=c$ unities) turn out to be
(see e.g. the Reference \cite{Jackiw2}):
\bea \left[A_j^a(x),E_k^b(y)\right] &=&
i\delta_{jk}\delta^{ab}\delta(x-y),\nn
\\
 \left[{\vec{E}}^a(x),\varphi^b(y)\right] &=& -iC^{ab}_c {\vec{E}}^c(x)\delta(x-y), \nn\\
\left[\vec{A}^a(x),\varphi^b(y)\right] &=& -iC^{ab}_c
\vec{A}^c(x)\delta(x-y)
-\frac{i}{e}\delta^{ab}\vec{\nabla}_x\delta(x-y),\nn
\\ \left[\varphi^a(x),\varphi^b(y)\right] &=& -iC^{ab}_c
\varphi^c(x)\delta(x-y).\label{ym-com} \eea
From the first commutator we see that
$A_j^a$ and
$E_j^a$ are conjugated variables, so that we have in principle three field
degrees of freedom for each ``colour'' index
$a=1,\dots,N^2-1$, that is,  $f=3(N^2-1)$ original field degrees of freedom.
All $\sigma$-fields $\varphi^a, a=1,\dots,N^2-1,$ do not have dynamics
this time, so that we can impose all of them as constraints
$\varphi^a(x)\Psi=0$ (the \emph{Gauss law}) on wave functionals $\Psi$ in
the corresponding quantum field theory. This operation takes away
$c=N^2-1$ field degrees of freedom out of the original $f$, leaving
$f'=f-c=2\times(N^2-1)$. These field degrees of freedom correspond to $(N^2-1)$
massless vector bosons (remember that transversal fields have two
polarizations only).

When we give dynamics to some of the $\sigma$-fields $\varphi^a$ through a
partial-trace $\sigma$-Lagrangian like (\ref{trazapar}), and perform
minimal coupling $\theta_\mu\to\theta_\mu-eA_\mu$, we introduce new
conjugated variables given by the new Lie-algebra commutators:
\be \left[{A}^a_0(x),\varphi^b(y)\right] = -iC^{ab}_c
{A}^c_0(x)\delta(x-y) -\frac{i}{e}C^{ab}_c\lambda^c\delta(x-y),
\label{euclidean}\ee
where (in the hope that no confusion arises) we mean here by ${A}^a_0(x)$
the generator of translations in the temporal component of the vector
potential. It should be stressed that the central term proportional to
$\lambda^c$ in the previous commutator can also be considered as
associated with some sort of ``symmetry breaking'' in the sense that it
can be hidden into a redefinition, $A^c_0\rightarrow
A^c_0+\frac{\lambda^c}{e}$, of $A^c_0$, which now acquires a non-zero
vacuum expectation value proportional to the mass $\lambda^c$, that is:
\[\langle 0|A^c_0|0\rangle=0\longrightarrow \langle 0|A^c_0|0\rangle=-\frac{\lambda^c}{e}.\]
This is one of the differences between the vacua of the massless and the
massive theory. Moreover, $\sigma$-fields could also acquire non-zero
vacuum expectation values, $\langle 0|\varphi^c|0\rangle=\omega^c$, which
could be mimicked by new central terms in the last commutator of
(\ref{ym-com}). See Ref. \cite{jpa} for the physical consequences of this
particular case.

Let us proceed by counting the new physical field degrees of freedom of
the massive theory. We shall restrict ourselves to
$G=SU(2)$ (i.e., $N=2$), for the sake of simplicity, and take $\lambda=i\lambda^3 T_3$
(the ``isospin'' charge). Let us also use the Cartan basis $\langle
T_\pm=T_1\pm iT_2, T_0=T_3\rangle$, with commutation relations:
\[
[T_\pm,T_0]=\mp T_\pm, \;\;[T_+,T_-]=2T_0.\]
The commutation relations (\ref{euclidean}) say that the temporal part
$W^±_0\equiv A^1_0\pm i A^2_0$ (we adopt the usual notation in the Standard Model of electro-weak
interactions for charged weak vector bosons) are conjugated fields of
$\varphi^{\mp}\equiv\varphi^1\mp \varphi^2$, since they give central terms
proportional to the mass matrix element $\lambda^3$. On the contrary, the
temporal part $B_0\equiv A^3_0$ and the $\sigma$-field
$\varphi^0\equiv \varphi^3$ remain without dynamics. Thus, in addition to the
original $f=3(N^2-1)=9$ field degrees of freedom connected to the spatial
part $\vec{A}^a, a=1,2,3$, we have two additional field degrees of freedom
attached to the temporal part
$W^±_0$, which results in $\tilde f=f+2=11$ field degrees of freedom. If we wished to be
consistent with the massless case, we should gauge fix $\sigma$-fields to
zero as constraints $\varphi^a(x)\Psi=0$ in the quantum theory. This
operation would take away $c=(N^2-1)=3$ field degrees of freedom out of
the original $\tilde f=11$, leaving $\tilde f'=\tilde f-c=8=2\times
1+3\times 2$. These field degrees of freedom correspond to $1$ massless
vector boson $B$ (two polarizations) plus $2$ massive vector bosons
$W^\pm$. We could say that the dynamics of the $\sigma$-fields
$\varphi^\pm$ has been transferred to the vector potentials $W^\pm$ (the longitudinal part) to conform
massive vector bosons. Hence, we do not need an extra (Higgs) field to
give mass to vector bosons but it is the gauge group itself which acquires
dynamics and transfers it to Yang-Mills fields.

A deeper (Lagrangian) analysis of the particular case of electro-weak
gauge group $SU(2)\otimes U(1)$, in the framework of the Standard Model,
is in preparation \cite{preparation}. Also, a proper Group Approach to
Quantization of this theory, clarifying the vacuum and including
interaction with fermions and comparisons with the Standard Model, as well
as a deeper discussion on the physical status of
$\sigma$-fields, is being investigated by the authors.

\begin{acknowledgments}
Work partially supported by the MCYT and Fundaci\'{o}n S\'{e}neca under
projects FIS2005-05736-C03-01 and 0310/PI/05
\end{acknowledgments}

\label{lastpage}

\end{document}